\documentclass[11pt]{article}

\usepackage[margin=1in]{geometry}
\usepackage[T1]{fontenc}
\usepackage{lmodern}
\usepackage{microtype}
\usepackage{amsmath,amssymb,amsthm}
\usepackage{mathtools}
\usepackage{algorithm}
\usepackage[noend]{algpseudocode}
\usepackage{booktabs}
\usepackage{enumitem}
\usepackage{xcolor}
\usepackage[colorlinks=true,linkcolor=blue!50!black,citecolor=green!40!black,urlcolor=blue!50!black]{hyperref}

\newtheorem{theorem}{Theorem}
\newtheorem{lemma}[theorem]{Lemma}
\newtheorem{proposition}[theorem]{Proposition}
\newtheorem{corollary}[theorem]{Corollary}
\theoremstyle{definition}

\newtheorem{fact}[theorem]{Fact}
\theoremstyle{remark}
\newtheorem{remark}[theorem]{Remark}

\newcommand{\good}{\textsf{G}}          
\newcommand{\bad}{\textsf{B}}           
\newcommand{\Test}{\textsc{Test}}
\newcommand{\BCE}{\textsc{BCE}}
\newcommand{\Retained}{C}               
\newcommand{\Disc}{D}                   
\newcommand{\nil}{\bot}
\newcommand{\gd}{\mathrm{good}}
\newcommand{\bd}{\mathrm{bad}}

\algrenewcommand\algorithmicrequire{\textbf{Input:}}
\algrenewcommand\algorithmicensure{\textbf{Output:}}

\title{Backtracking Candidate Elimination:\\
A One-Pass Algorithm for the Chip Testing Problem}

\author{Shiyi Chen\\
University of California, Berkeley\\
\texttt{yevette\_chen@berkeley.edu}
}

\date{}

\begin{document}

\maketitle

\begin{abstract}
In the \emph{chip testing problem}, we are given $n$ chips, strictly more than
half of which are good. Chips can test one another in pairs; a good chip always
reports the status of the other chip correctly, whereas a bad chip may report
arbitrarily and adversarially. The goal is to identify a single chip that is
guaranteed to be good. The problem originates in system-level fault diagnosis
and is closely related to the ``knights and spies'' puzzle. The standard textbook
solution is a halving recursion that tests disjoint pairs in rounds and keeps
one chip from each consistent pair.

We present the \emph{Backtracking Candidate Elimination} (\BCE) algorithm, a
sequential alternative that scans the chips once while maintaining a current
candidate and a stack of retained chips. Every chip is tested at most once as
the incoming chip; when a test is inconclusive the candidate and the incoming
chip are discarded together, and the algorithm backtracks to the most recently
retained chip. \BCE{} uses at most $n-1$ tests and $O(n)$ time, needs no
parity case analysis, and works online. Its correctness follows from two
invariants: the retained chips all have the same type, and every discarded
pair contains at least one bad chip. We explain how \BCE{} can be viewed as
the Boyer--Moore majority vote algorithm with its counter replaced by a stack of
physical witnesses, and why that replacement is needed. We also give an early
termination rule and a variant for the weaker model of one-directional tests.
\end{abstract}

\medskip
\noindent\textbf{Keywords:} chip testing, fault diagnosis, majority problem,
knights and spies, candidate elimination, Boyer--Moore majority vote.

\section{Introduction}\label{sec:intro}

Consider a set of $n$ supposedly identical integrated-circuit chips that can
test one another. A test places two chips $x$ and $y$ in a jig; each chip
reports whether the other is good or bad. A good chip always reports
correctly. A bad chip may report anything, and we assume the bad chips
collude adversarially. Given the promise that strictly more than $n/2$ of the
chips are good, how many tests are needed to find one chip that is certainly
good?

This is the \emph{chip testing problem}, popularized by Cormen, Leiserson,
Rivest, and Stein as Problem~4-5 of \emph{Introduction to
Algorithms}~\cite{CLRS09}. It is a clean special case of \emph{adaptive
system-level fault diagnosis}, a model introduced by Preparata, Metze, and
Chien~\cite{PMC67} and studied adaptively by Hakimi and
Nakajima~\cite{HN84}, Hakimi and Schmeichel~\cite{HS84}, and many
others~\cite{SHOS90,PU98}. In that setting, units of a multiprocessor system
test one another, and a faulty unit's verdicts are unreliable. A single good
unit is the crucial resource. Once one is known, it can diagnose everything
else. For this reason the ``find one good chip'' subproblem is often simply
called \emph{the chip problem}~\cite{ACRS04,ACRS06}. Closely related problems
appear as logic puzzles about truth-tellers and liars~\cite{Smu78,Ble83},
most recently in the form of \emph{knights} (who always tell the truth) and
\emph{spies} (who may lie or tell the truth as they wish)~\cite{Wil10,Wil16}.
The strict-majority promise is necessary. If at least half the chips are
bad, the bad chips can imitate the good ones perfectly, and no strategy can
succeed~\cite{PMC67,CLRS09}.

\paragraph{The standard solution.}
The textbook approach~\cite{CLRS09} is a halving recursion. Pair up the chips
and test each pair. If both chips in a pair report ``good,'' keep one of them.
Otherwise, discard both. With a parity rule for the leftover chip when the
count is odd, the surviving chips still have a strict good majority. That
yields the recurrence $T(n) = T(\lceil n/2\rceil) + \lfloor n/2\rfloor$ and
hence $\Theta(n)$ tests. The argument is elegant, but it is organized in
\emph{rounds}. The reader must reason about a whole level of pairs at once,
handle odd sizes, and apply induction on the recursion.

\paragraph{Our contribution.}
We present the \emph{Backtracking Candidate Elimination} (\BCE) algorithm, a
one-pass alternative based on the same elimination principle but using a
different representation of the state. Rather than a tree of rounds, \BCE{}
keeps a \emph{current candidate} $p$ and a stack $R$ of \emph{retained}
chips, and it processes the remaining chips one at a time:
\begin{itemize}[nosep]
  \item if $p$ and the incoming chip $s$ both call each other good, $s$ is
        retained (pushed onto $R$) and $p$ is kept;
  \item otherwise, $p$ and $s$ are discarded as a pair, and the algorithm
        \emph{backtracks}: the most recently retained chip becomes the new
        candidate.
\end{itemize}
The key point is that a bad chip can be retained only temporarily. We never
need to determine whether a retained chip is lying. If the retained chips
are bad, the scan will eventually eliminate every one of them. Our results
are as follows.
\begin{enumerate}[label=(\roman*),nosep]
  \item \BCE{} returns a good chip using at most $n-1$ tests, $O(n)$ time,
        and a single left-to-right pass. It can therefore run \emph{online},
        with chips arriving one at a time
        (Theorems~\ref{thm:correct} and~\ref{thm:cost}).
  \item The correctness proof is non-recursive and rests on two short
        invariants: the retained chips form a \emph{homogeneous} set
        (Lemma~\ref{lem:homog}), and discarded chips are \emph{bad-heavy}
        (Lemma~\ref{lem:discard}). No parity case analysis is required.
  \item We make precise the connection with the Boyer--Moore majority vote
        algorithm \textsc{Mjrty}~\cite{BM91}. \BCE{} is \textsc{Mjrty} with
        the counter replaced by a stack of witnesses. We explain why chip
        testing makes this replacement necessary (Section~\ref{sec:mjrty}).
  \item We give an early termination rule
        (Proposition~\ref{prop:early}). We also give a variant for the
        weaker model in which a test yields only \emph{one} chip's verdict
        on the other. In that variant, the stack order genuinely matters
        (Section~\ref{sec:oneway}).
\end{enumerate}
In keeping with the spirit of simplicity in algorithms, we do not claim that
the linear bound is new, and sequential strategies of this flavor may well
be folklore. Our aim is a presentation of the chip problem that
is short, sequential, and easy to teach, together with a proof that fits in
a paragraph.

\paragraph{Organization.}
Section~\ref{sec:prelim} fixes the model. Section~\ref{sec:dc} recalls the
standard halving algorithm. Section~\ref{sec:bce} presents \BCE{} with an
example, and Section~\ref{sec:analysis} proves its correctness and cost.
Section~\ref{sec:discussion} compares the two algorithms, relates \BCE{} to
majority voting, and discusses extensions. Section~\ref{sec:related} surveys
related work.

\section{Model and Basic Facts}\label{sec:prelim}

Let $X = \{x_1,\dots,x_n\}$ be a set of chips. Each chip has a hidden
\emph{type}, either good ($\good$) or bad ($\bad$). Let $g$ and $b$ denote
the numbers of good and bad chips, so $g + b = n$. We are promised that
\begin{equation}\label{eq:majority}
  g > b .
\end{equation}
For a set $S \subseteq X$ we write $\gd(S)$ and $\bd(S)$ for the number of
good and bad chips in $S$.

\paragraph{Tests.}
A \emph{test} $\Test(x,y)$ on two distinct chips returns an ordered pair of
verdicts $(v_x, v_y) \in \{\good,\bad\}^2$. Here $v_x$ is $x$'s claim about
$y$, and $v_y$ is $y$'s claim about $x$. A good chip's verdict is always
correct. A bad chip's verdict is arbitrary and may be chosen by an adaptive
adversary that knows all types, the algorithm, and the history of tests. We
say a test is \emph{consistent} if it returns $(\good,\good)$ and
\emph{inconsistent} otherwise. Table~\ref{tab:outcomes} lists the possible
outcomes.

\begin{table}[h]
\centering
\begin{tabular}{@{}ccl@{}}
\toprule
Types of $(x,y)$ & Possible verdicts $(v_x,v_y)$ & Consistent possible? \\
\midrule
$(\good,\good)$ & $(\good,\good)$ only                     & yes (forced) \\
$(\good,\bad)$  & $(\bad,\good)$ or $(\bad,\bad)$          & no \\
$(\bad,\good)$  & $(\good,\bad)$ or $(\bad,\bad)$          & no \\
$(\bad,\bad)$   & any of the four                          & yes \\
\bottomrule
\end{tabular}
\caption{Outcomes of $\Test(x,y)$ as a function of the chip types.}
\label{tab:outcomes}
\end{table}

The whole analysis rests on two immediate consequences of
Table~\ref{tab:outcomes}.

\begin{fact}\label{fact:consistent}
If $\Test(x,y)$ is consistent, then $x$ and $y$ have the same type.
\end{fact}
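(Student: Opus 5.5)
We prove Fact~\ref{fact:consistent} by contraposition, reading it straight off Table~\ref{tab:outcomes}. Suppose $x$ and $y$ have different types. By symmetry of the roles we may assume $x$ is good and $y$ is bad; the case where $x$ is bad and $y$ is good is identical with the roles swapped. Because $x$ is good, its verdict is always correct, so $v_x$ must be the true type of $y$, namely $v_x = \bad$.

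A consistent test requires $v_x = v_y = \good$. So the outcome cannot be $(\good,\good)$, and the test is inconsistent, whatever the bad chip $y$ reports. In the symmetric case, the good chip $y$ gives $v_y = \bad$, and the same conclusion follows.

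Hence a consistent outcome can only arise when the types agree: both good (where $(\good,\good)$ is in fact forced) or both bad (where the adversary may choose any outcome). This is exactly the contrapositive of the claim. There is no real obstacle here. The only point worth stating explicitly is that the argument relies solely on the good chip's verdict, so it holds against an arbitrary adaptive adversary controlling the bad chip's report.
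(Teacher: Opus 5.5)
Your proof is correct and takes essentially the same approach as the paper, which reads Fact~\ref{fact:consistent} directly off Table~\ref{tab:outcomes}. That table's mixed-type rows exclude $(\good,\good)$ precisely because the good chip's verdict is forced to be $\bad$, which is exactly the contrapositive argument you spell out.
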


\begin{fact}\label{fact:inconsistent}
If $\Test(x,y)$ is inconsistent, then at least one of $x,y$ is bad.
Consequently, $\bd(\{x,y\}) \ge \gd(\{x,y\})$.
\end{fact}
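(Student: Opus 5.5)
The statement to prove is Fact~\ref{fact:inconsistent}: an inconsistent test forces at least one bad chip, and hence the pair $\{x,y\}$ has at least as many bad chips as good ones. I will argue by contraposition, reading everything off Table~\ref{tab:outcomes}, which encodes only the rule that a good chip's verdict is always correct.

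\textbf{Step 1: two good chips give a consistent test.} Suppose both $x$ and $y$ are good. Then $v_x$ must equal the true type of $y$, which is $\good$. Likewise $v_y$ must be $\good$. So $\Test(x,y)$ returns $(\good,\good)$, which is a consistent test; this is the first row of the table. Contrapositively, if the test is inconsistent, then $x$ and $y$ are not both good, so at least one of them is bad.

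\textbf{Step 2: count the pair.} Since $x \ne y$, the set $\{x,y\}$ has exactly two elements, and each has exactly one type. So $\gd(\{x,y\}) + \bd(\{x,y\}) = 2$. By Step~1, $\bd(\{x,y\}) \ge 1$. Therefore $\gd(\{x,y\}) = 2 - \bd(\{x,y\}) \le 1 \le \bd(\{x,y\})$, which is the claimed inequality.

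No step here is a real obstacle; the argument is a direct case check. The only point needing care is that we never use any property of bad chips. The adversary's freedom is irrelevant, because the conclusion depends only on the behaviour of good chips.
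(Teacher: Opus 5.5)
Your proof is correct and matches the paper, which reads this fact directly off Table~\ref{tab:outcomes}: two good chips force the verdict $(\good,\good)$, so an inconsistent test rules out that row. Your counting step $\gd(\{x,y\}) = 2 - \bd(\{x,y\}) \le 1 \le \bd(\{x,y\})$ spells out the ``consequently'' part, which the paper leaves implicit.
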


The cost of an algorithm is the number of tests it performs in the worst
case over all type assignments satisfying~\eqref{eq:majority} and all
adversarial answers.

\section{The Standard Halving Algorithm}\label{sec:dc}

For comparison, we recall the textbook solution~\cite{CLRS09}. It is usually
called divide-and-conquer, though it is more precisely a prune-and-search
recursion. Partition the chips into $\lfloor n/2 \rfloor$ disjoint pairs,
leaving one chip $z$ aside if $n$ is odd, and test every pair. By
Fact~\ref{fact:inconsistent}, discarding both chips of every inconsistent pair
removes at least as many bad chips as good ones. By
Fact~\ref{fact:consistent}, each consistent pair is homogeneous, so keeping
exactly one chip from each consistent pair halves the good and bad counts
among those pairs. If $n$ is odd, $z$ is kept exactly when the number of
consistent pairs is even.\footnote{Let $a$ and $c$ be the numbers of
consistent pairs of good and of bad chips. The majority among the undiscarded
chips gives $2a + [z\in\good] > 2c + [z\in\bad]$. If $a+c$ is even, then
either $a=c$ and $z$ is good, or $a \ge c+2$; in both cases keeping $z$ is
safe. If $a+c$ is odd, then $a \ge c+1$ and dropping $z$ is safe.} A short
calculation shows that the surviving set, of size at most $\lceil n/2\rceil$,
still satisfies~\eqref{eq:majority}. Recursing until one chip remains gives
\[
  T(1) = 0,\qquad T(n) \le T(\lceil n/2 \rceil) + \lfloor n/2 \rfloor,
\]
so $T(n) \le n-1$.

The algorithm has a natural parallel structure: all tests within a round are
independent, and there are $\lceil \log_2 n\rceil$ rounds. However, its
correctness argument is inherently \emph{level-by-level}. It needs the whole
input up front, a parity rule for the odd chip, and an induction over rounds.
The question that motivates this note is whether the same elimination
principle can be applied \emph{one chip at a time}.

\section{The Backtracking Candidate Elimination Algorithm}\label{sec:bce}

\BCE{} fixes an arbitrary order $x_1,\dots,x_n$ of the chips and scans it
once. At all times it maintains
\begin{itemize}[nosep]
  \item a \emph{candidate} $p \in X \cup \{\nil\}$, and
  \item a stack $R$ of \emph{retained} chips, with $R$ empty whenever
        $p = \nil$.
\end{itemize}
We call $\Retained = \{p\} \cup R$ (or $\emptyset$ if $p=\nil$) the
\emph{retained set}. Chips that leave the retained set, and incoming chips
that never enter it, are \emph{discarded}. Initially $p = x_1$ and $R$ is
empty. When chip $x_i$ arrives:

\begin{enumerate}[label=\textbf{Case \arabic*.},leftmargin=*]
  \item \textbf{No candidate ($p=\nil$).} Make $x_i$ the candidate without
        testing it.
  \item \textbf{Consistent test.} If $\Test(p,x_i)$ returns
        $(\good,\good)$, then $p$ and $x_i$ have the same type, although we
        do not know which. Retain $x_i$ by pushing it onto $R$, and keep
        $p$ as the candidate.
  \item \textbf{Inconsistent test.} Otherwise, $\{p,x_i\}$ contains a bad
        chip, and both are discarded. The algorithm then \emph{backtracks}.
        If $R$ is nonempty, the most recently retained chip is popped and
        becomes the new candidate. If $R$ is empty, set $p = \nil$.
\end{enumerate}

At the end of the scan, \BCE{} returns $p$. Pseudocode appears as
Algorithm~\ref{alg:bce}.

\begin{algorithm}[t]
\caption{Backtracking Candidate Elimination (\BCE)}\label{alg:bce}
\begin{algorithmic}[1]
\Require chips $x_1,\dots,x_n$ with strictly more good than bad chips
\Ensure a good chip
\State $p \gets x_1$;\quad $R \gets$ empty stack
\For{$i = 2,\dots,n$}
  \If{$p = \nil$}
     \State $p \gets x_i$ \Comment{Case 1: restart without a test}
  \ElsIf{$\Test(p, x_i) = (\good,\good)$}
     \State $\textsc{Push}(R, x_i)$ \Comment{Case 2: retain $x_i$, keep $p$}
  \Else \Comment{Case 3: discard $p$ and $x_i$, then backtrack}
     \If{$R$ is nonempty}
        \State $p \gets \textsc{Pop}(R)$
     \Else
        \State $p \gets \nil$
     \EndIf
  \EndIf
\EndFor
\State \Return $p$
\end{algorithmic}
\end{algorithm}

\paragraph{Example.}
Suppose the hidden types, in scan order, are
\[
  x_1,\dots,x_7 \;=\; \good,\ \good,\ \bad,\ \bad,\ \bad,\ \good,\ \good .
\]
Table~\ref{tab:trace} traces the execution. The trace does not depend on how
the bad chips answer. Every test involves at least one good chip, whose
verdict is forced. The pairs $\{x_1,x_3\}$, $\{x_2,x_4\}$, and $\{x_5,x_6\}$
are discarded. The candidate backtracks from $x_1$ to the retained chip $x_2$,
and \BCE{} returns the good chip $x_7$ after four tests.

\begin{table}[h]
\centering
\begin{tabular}{@{}cccllc@{}}
\toprule
$i$ & type of $x_i$ & candidate $p$ & action & result & $R$ after \\
\midrule
2 & $\good$ & $x_1$     & $\Test(x_1,x_2) = (\good,\good)$ & retain $x_2$                          & $[x_2]$ \\
3 & $\bad$  & $x_1$     & $\Test(x_1,x_3)$ inconsistent    & discard $x_1,x_3$; $p\gets x_2$       & $[\,]$ \\
4 & $\bad$  & $x_2$     & $\Test(x_2,x_4)$ inconsistent    & discard $x_2,x_4$; $p\gets\nil$       & $[\,]$ \\
5 & $\bad$  & $\nil$    & no test                          & $p \gets x_5$                          & $[\,]$ \\
6 & $\good$ & $x_5$     & $\Test(x_5,x_6)$ inconsistent    & discard $x_5,x_6$; $p\gets\nil$       & $[\,]$ \\
7 & $\good$ & $\nil$    & no test                          & $p \gets x_7$                          & $[\,]$ \\
\bottomrule
\end{tabular}
\caption{Execution of \BCE{} on the types $\good\good\bad\bad\bad\good\good$.
The output is $x_7$.}
\label{tab:trace}
\end{table}

Two features of the example deserve emphasis. First, the bad chip $x_5$
temporarily became the candidate. \BCE{} never tries to detect that
$x_5$ is bad. It simply discards $x_5$ together with the good chip that
exposes an inconsistency. Second, $x_2$ was retained and later re-selected by
backtracking. Each chip is tested \emph{at most once as the incoming chip},
although it may later serve as the candidate in further tests.

\section{Analysis}\label{sec:analysis}

We analyze \BCE{} through two invariants that hold after each iteration of
the loop. Let $\Retained_i$ be the retained set and $\Disc_i$ the set of
discarded chips after chip $x_i$ has been processed, with
$\Retained_1 = \{x_1\}$ and $\Disc_1 = \emptyset$.

\begin{lemma}[Homogeneity]\label{lem:homog}
For every $i$, all chips in $\Retained_i$ have the same type.
\end{lemma}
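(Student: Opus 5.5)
We prove Lemma~\ref{lem:homog} by induction on $i$, checking that each of the three cases of Algorithm~\ref{alg:bce} preserves the property ``all chips in the retained set have the same type.'' The base case is immediate: $\Retained_1 = \{x_1\}$ is a singleton. For the inductive step we assume every chip in $\Retained_{i-1}$ has the same type, and we examine how the processing of $x_i$ changes the retained set.

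In Case~1 we have $p=\nil$, so $R$ is empty and $\Retained_{i-1}=\emptyset$. The new retained set is $\{x_i\}$, which is trivially homogeneous. In Case~3 the candidate $p$ is discarded, $x_i$ never enters, and the top of $R$ (if any) becomes the candidate. Hence $\Retained_i = \Retained_{i-1}\setminus\{p\}$, possibly empty. A subset of a homogeneous set is homogeneous, so the invariant holds. Note that Case~3 never adds a chip, so we do not need to know anything about the types of $p$ or $x_i$ here.

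Case~2 is the only step that needs an argument, since it is the only case in which a chip joins a nonempty retained set. There $\Test(p,x_i)$ is consistent, so Fact~\ref{fact:consistent} says $x_i$ has the same type as $p$. The candidate $p$ lies in $\Retained_{i-1}$, so by the inductive hypothesis that type is the common type of all of $\Retained_{i-1}$. Thus $\Retained_i = \Retained_{i-1}\cup\{x_i\}$ is homogeneous.

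The one point to watch, and the step I expect to matter most, is that the newcomer is always compared against a chip that is actually in the retained set, namely the current candidate. This holds because $p\neq\nil$ in Case~2 and $p$ is by definition a member of $\Retained_{i-1}$. Once this is checked, the induction closes.
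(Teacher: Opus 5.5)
Your proof is correct and follows the paper's argument essentially verbatim. It uses induction on $i$ with a singleton base case, Case~1 yielding a singleton, Case~3 yielding the subset $\Retained_{i-1}\setminus\{p\}$, and Case~2 combining Fact~\ref{fact:consistent} with the observation that $p\in\Retained_{i-1}$, which the paper's one-line Case~2 uses implicitly.
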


\begin{proof}
By induction on $i$. The set $\Retained_1 = \{x_1\}$ is trivially
homogeneous. Suppose $\Retained_{i-1}$ is homogeneous. In Case~1,
$\Retained_i = \{x_i\}$. In Case~2, $\Retained_i = \Retained_{i-1} \cup
\{x_i\}$, and $x_i$ has the same type as $p \in \Retained_{i-1}$ by
Fact~\ref{fact:consistent}. In Case~3, $\Retained_i =
\Retained_{i-1}\setminus\{p\}$ is a subset of a homogeneous set. In every
case, $\Retained_i$ is homogeneous.
\end{proof}

\begin{lemma}[Discarded chips are bad-heavy]\label{lem:discard}
For every $i$, $\bd(\Disc_i) \ge \gd(\Disc_i)$.
\end{lemma}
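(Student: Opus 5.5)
The plan is to prove the statement by induction on $i$, in parallel with the proof of Lemma~\ref{lem:homog}. The strengthened claim is that $\Disc_i$ is a disjoint union of pairs $\{p,x_j\}$, where each pair was discarded together in some Case~3 step. Once we have this pair decomposition, the count follows by summing Fact~\ref{fact:inconsistent} over the pairs. Each pair contributes at least as many bad chips as good chips, and because the pairs are disjoint, the contributions add up to $\bd(\Disc_i) \ge \gd(\Disc_i)$.

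The induction runs as follows. For the base case, $\Disc_1 = \emptyset$, so both sides are $0$. For the step, suppose the claim holds for $\Disc_{i-1}$ and split by the case used when $x_i$ is processed. In Case~1 the chip $x_i$ becomes the candidate, so it enters the retained set and nothing is discarded; hence $\Disc_i = \Disc_{i-1}$. In Case~2 the chip $x_i$ is pushed onto $R$ and $p$ stays, so again $\Disc_i = \Disc_{i-1}$. In Case~3 the test $\Test(p,x_i)$ was inconsistent, and we have $\Disc_i = \Disc_{i-1} \cup \{p, x_i\}$.

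Case~3 needs two checks. First, the union must be disjoint. The candidate $p$ was in $\Retained_{i-1}$ and so was not yet discarded, since chips leave the retained set only once. The chip $x_i$ is arriving for the first time, so it was never discarded either. Also $p \ne x_i$, because the test is applied to two distinct chips. Second, Fact~\ref{fact:inconsistent} gives $\bd(\{p,x_i\}) \ge \gd(\{p,x_i\})$. Adding this to the inductive hypothesis yields the inequality for $\Disc_i$.

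The only step that needs any care is confirming that discarded chips really come in disjoint tested pairs. In particular, no chip is discarded alone, and no chip is counted twice. This follows from the bookkeeping of the algorithm: a chip leaves $\Retained$ only as the candidate in Case~3, and in that same step it is paired with the incoming chip. Popping the next candidate from $R$ keeps that chip in $\Retained$, so the pop does not discard it. With this checked, the rest is immediate.
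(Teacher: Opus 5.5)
Your proposal is correct and follows the paper's argument: discards happen only in Case~3, as a pair $\{p,x_i\}$ from an inconsistent test, and summing Fact~\ref{fact:inconsistent} over these disjoint pairs gives the inequality. Your explicit induction and disjointness check just spell out what the paper's two-line proof leaves implicit.
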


\begin{proof}
Chips are discarded only in Case~3, and always two at a time, as a pair
$\{p, x_i\}$ that has just produced an inconsistent test. By
Fact~\ref{fact:inconsistent}, each such pair contains at least as many bad
chips as good ones. Summing over the disjoint discarded pairs proves the
claim.
\end{proof}

Since every chip $x_1,\dots,x_i$ is either retained or discarded after step
$i$, we have the partition
\begin{equation}\label{eq:partition}
  \{x_1,\dots,x_i\} = \Retained_i \,\sqcup\, \Disc_i .
\end{equation}

\begin{theorem}[Correctness]\label{thm:correct}
Under the promise $g > b$, \BCE{} returns a good chip.
\end{theorem}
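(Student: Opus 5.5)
The plan is to apply the two invariants at the final step $i=n$ and combine them with the partition~\eqref{eq:partition}. For $i=n$, the partition reads $X = \Retained_n \sqcup \Disc_n$. Counting good and bad chips on each side gives
\[
  \gd(\Retained_n) + \gd(\Disc_n) = g > b = \bd(\Retained_n) + \bd(\Disc_n).
\]
By Lemma~\ref{lem:discard}, $\bd(\Disc_n) \ge \gd(\Disc_n)$. Subtracting this from the strict inequality gives $\gd(\Retained_n) > \bd(\Retained_n)$.

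Two conclusions follow. First, $\gd(\Retained_n) \ge 1$, so $\Retained_n \ne \emptyset$. By the convention that $\Retained = \emptyset$ exactly when $p = \nil$, the final candidate $p$ is therefore an actual chip and not $\nil$. Second, Lemma~\ref{lem:homog} says $\Retained_n$ is homogeneous. A homogeneous set with at least one good chip consists only of good chips. Since $p \in \Retained_n$, the returned chip $p$ is good.

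The one point that needs care is the partition~\eqref{eq:partition} itself, together with the bookkeeping that the retained set is $\{p\}\cup R$ and is empty iff $p=\nil$. I will check briefly, case by case, that every processed chip is either retained or discarded:
\begin{itemize}[nosep]
  \item In Case~1, $x_i$ becomes the candidate, so it is retained.
  \item In Case~2, $x_i$ is pushed onto $R$, so it is retained.
  \item In Case~3, the pair $\{p,x_i\}$ moves to $\Disc$. A popped chip was already in $\Retained$ and stays there.
\end{itemize}
The discarded pairs are disjoint because each chip is discarded at most once. This is the disjointness that Lemma~\ref{lem:discard} relies on. Given this, the argument is a single inequality chain, and I expect no real obstacle.
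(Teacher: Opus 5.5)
Your proof is correct and matches the paper's argument: you apply the partition at $i=n$, use Lemma~\ref{lem:discard} to get $\gd(\Retained_n) > \bd(\Retained_n)$, deduce $\Retained_n \ne \emptyset$ and hence $p \ne \nil$, and finish with Lemma~\ref{lem:homog}. Your case check of the partition and your argument that the discarded pairs are disjoint are valid, and they make explicit two points the paper leaves implicit.
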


\begin{proof}
Apply~\eqref{eq:partition} with $i=n$. Then
\[
  \gd(\Retained_n) - \bd(\Retained_n)
  = (g - b) - \bigl(\gd(\Disc_n) - \bd(\Disc_n)\bigr)
  \;\ge\; g - b \;>\; 0,
\]
where the inequality uses Lemma~\ref{lem:discard}. Thus $\Retained_n$
contains at least one good chip. In particular, $\Retained_n \ne \emptyset$,
so $p \ne \nil$. By Lemma~\ref{lem:homog}, every chip in $\Retained_n$ is
good, and in particular the returned candidate $p$ is good.
\end{proof}

The proof shows more than the theorem states. It captures precisely the
intuition that bad chips can be retained only temporarily.

\begin{corollary}[All retained bad chips are eventually eliminated]
\label{cor:no-bad}
At termination, the retained set $\Retained_n$ consists only of good chips.
Hence every bad chip that is ever retained or becomes the candidate is
eventually discarded, regardless of how the bad chips answer.
\end{corollary}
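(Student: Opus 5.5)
The plan is to read Corollary~\ref{cor:no-bad} directly off the proof of Theorem~\ref{thm:correct}, combined with Lemma~\ref{lem:homog}. No new invariant is needed.

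For the first sentence, I would repeat the computation from the proof of Theorem~\ref{thm:correct}. The partition~\eqref{eq:partition} at $i=n$, together with Lemma~\ref{lem:discard}, gives $\gd(\Retained_n) - \bd(\Retained_n) \ge g-b > 0$. So $\Retained_n$ is nonempty and contains at least one good chip. Lemma~\ref{lem:homog} says all chips in $\Retained_n$ share a single type. That type must therefore be $\good$, and $\Retained_n$ consists only of good chips.

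For the second sentence, I would first observe that the retained set contains every candidate and every chip ever pushed onto $R$. Now take a bad chip $y$ that at some step $i$ belongs to $\Retained_i$, either as $p$ or as an element of $R$. Every chip of $x_1,\dots,x_n$ lies in exactly one of $\Retained_n$ and $\Disc_n$. Since $y$ is bad and $\Retained_n$ contains only good chips, $y \in \Disc_n$.

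To phrase this as ``eventually discarded'', I would point out that the update rules never move a chip from $\Disc$ back into $\Retained$. In Case~1 only the new chip $x_i$ enters, Case~2 only pushes $x_i$, and Case~3 only removes chips. Hence $\Disc_i$ is monotone in $i$, and $y$ leaves the retained set at some step $j>i$ and never returns.

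Independence from the bad chips' answers holds because Lemmas~\ref{lem:homog} and~\ref{lem:discard} rely only on Facts~\ref{fact:consistent} and~\ref{fact:inconsistent}. Those facts hold for every adversarial behavior. The main care needed is this bookkeeping about monotonicity of $\Disc_i$. The rest is immediate.
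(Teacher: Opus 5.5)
Your proposal is correct and matches the paper's route. The paper also reads the corollary off the proof of Theorem~\ref{thm:correct}: there, $\gd(\Retained_n) - \bd(\Retained_n) \ge g-b > 0$ together with Lemma~\ref{lem:homog} already forces $\Retained_n$ to be all good, and the second sentence follows from the partition~\eqref{eq:partition} (your check that $\Disc_i$ is monotone is a valid detail the paper leaves implicit).
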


Concretely, a run of bad chips can survive for a while only if they are
retained under a bad candidate, since a good candidate exposes a bad chip
immediately. Because good chips are in the majority, the scan eventually
meets enough good chips to pair off and discard all of them, one backtracking
step at a time. Lemmas~\ref{lem:homog} and~\ref{lem:discard} turn this
informal ``re-examination'' argument into a two-line calculation.

\begin{theorem}[Cost]\label{thm:cost}
\BCE{} performs at most $n-1$ tests and runs in $O(n)$ time.
\end{theorem}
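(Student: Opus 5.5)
The plan is to count iterations of the main loop and charge each test to a single iteration. The loop in Algorithm~\ref{alg:bce} runs for $i=2,\dots,n$, which is exactly $n-1$ iterations. I would check that each iteration performs at most one call to $\Test$. In Case~1 ($p=\nil$) the iteration performs no test. In Cases~2 and~3 it performs exactly one test, namely $\Test(p,x_i)$; the branch that follows (a push, or a pop or an assignment of $\nil$) involves no further test. Hence the total number of tests is at most $n-1$. The same charging yields the refinement that the number of tests equals $n-1$ minus the number of Case~1 iterations.

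For the running time, I would argue that each iteration does $O(1)$ work, assuming a test is a unit-cost operation. The work in an iteration is one comparison $p=\nil$, at most one test, and at most one stack operation (a \textsc{Push} or a \textsc{Pop}). Each of these is $O(1)$ with an array or linked-list stack. Initialization and the final return are $O(1)$, so the total time is $O(n)$.

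Amortized analysis is not needed here, because no iteration pops more than once. As a sanity check, the stack size is always at most $n$, since each chip is pushed at most once, namely when it arrives as $x_i$. So the space used is $O(n)$ and no operation is ever blocked.

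I see no real obstacle in this argument. The only point to state carefully is that backtracking is a single pop per inconsistent test, not a cascade. A new candidate obtained by popping is not re-tested against anything within the same iteration. This is exactly what makes the bound ``one test per arriving chip'' hold.
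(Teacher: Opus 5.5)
Your proof is correct and is the paper's argument: each of the $n-1$ loop iterations performs at most one test, namely $\Test(p,x_i)$, plus $O(1)$ stack work. Your additional remarks are accurate refinements of the same charging argument: the exact count of $n-1$ minus the number of Case~1 iterations, and the point that backtracking is a single pop rather than a cascade.
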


\begin{proof}
Each iteration $i = 2,\dots,n$ performs at most one test, namely
$\Test(p,x_i)$, and $O(1)$ stack operations.
\end{proof}

\begin{remark}[Space and online operation]\label{rem:space}
\BCE{} reads the chips in a single pass and never revisits a discarded chip.
It therefore applies unchanged when chips arrive one at a time and the
algorithm must, at any moment, be able to name its current candidate. Its
working memory is the retained set, which may contain $\Theta(n)$ chips when
many consecutive tests are consistent. Section~\ref{sec:mjrty} explains why
this is unavoidable for this style of algorithm in the chip model.
\end{remark}

\begin{remark}[Detecting a broken promise]
If \BCE{} ends with $p=\nil$, the proof of Theorem~\ref{thm:correct} shows
that the promise $g>b$ was violated. The converse does not hold. Indeed, no
algorithm can verify the promise, because bad chips can mimic good ones.
\end{remark}

\subsection{Early termination}

The partition~\eqref{eq:partition} also shows that \BCE{} can often stop
before reading the whole input.

\begin{proposition}[Early termination]\label{prop:early}
Suppose that after processing $x_i$ the retained set has size
$h = |\Retained_i|$, and $r = n - i$ chips remain unread. If $h \ge r$ and
$h \ge 1$, then every chip in $\Retained_i$ is good. In particular, \BCE{}
may stop and return $p$.
\end{proposition}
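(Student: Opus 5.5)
We argue by contradiction, combining the two invariants with a counting bound on the unread chips. Suppose $h = |\Retained_i| \ge 1$ and $h \ge r$, and suppose some chip of $\Retained_i$ is bad. By Lemma~\ref{lem:homog}, $\Retained_i$ is homogeneous, so all $h$ of its chips are bad. This gives $\gd(\Retained_i) - \bd(\Retained_i) = -h$.

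Next we account for every chip. Let $U = \{x_{i+1},\dots,x_n\}$ be the $r$ unread chips. Extending the partition~\eqref{eq:partition}, we have $X = \Retained_i \sqcup \Disc_i \sqcup U$. Summing the differences $\gd(\cdot)-\bd(\cdot)$ over the three parts gives
\[
  g - b = \bigl(\gd(\Retained_i)-\bd(\Retained_i)\bigr) + \bigl(\gd(\Disc_i)-\bd(\Disc_i)\bigr) + \bigl(\gd(U)-\bd(U)\bigr).
\]
We bound the three terms separately:
\begin{itemize}[nosep]
  \item the first term equals $-h$ under our assumption;
  \item the second term is at most $0$ by Lemma~\ref{lem:discard};
  \item the third term is at most $|U| = r$, since $U$ contains at most $r$ good chips.
\end{itemize}
Hence $g - b \le r - h \le 0$, which contradicts the promise~\eqref{eq:majority}. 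Therefore every chip in $\Retained_i$ is good.

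The hypothesis $h \ge 1$ guarantees that $\Retained_i$ is nonempty. Since $R$ is empty whenever $p = \nil$, a nonempty retained set forces $p \ne \nil$, and so $p \in \Retained_i$. It follows that $p$ is good, and \BCE{} may stop and return it.

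There is no real obstacle in this argument. The only point needing care is to extend the partition to include the unread chips $U$, and to bound their contribution by $r$ using only $|U| = r$, without knowing their types.
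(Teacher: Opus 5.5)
Your proof is correct and is essentially the paper's argument: you use homogeneity to make $\Retained_i$ all bad, bound the discarded chips with Lemma~\ref{lem:discard} and the unread chips by $r$, and contradict $g>b$. The only difference is that you write the paper's separate bounds on $b$ and $g$ as one signed sum over the partition $\Retained_i \sqcup \Disc_i \sqcup U$, and you also spell out why $h \ge 1$ gives $p \ne \nil$ and $p \in \Retained_i$, which the paper leaves implicit.
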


\begin{proof}
Suppose, for contradiction, that $\Retained_i$ is not all good. By
Lemma~\ref{lem:homog} it is then all bad, so $b \ge h + \bd(\Disc_i)$. The
good chips lie in $\Disc_i$ or among the $r$ unread chips, so
$g \le \gd(\Disc_i) + r \le \bd(\Disc_i) + r$ by Lemma~\ref{lem:discard}.
Combining these bounds with $g > b$ gives $r > h$, a contradiction.
\end{proof}

For example, if all chips are good, the retained set grows by one per test,
and \BCE{} stops after about $n/2$ tests instead of $n-1$. Independently, when
$n$ is even, the promise gives $g \ge b + 2$, so one chip may be dropped at
the start. This saves one test in the worst case.

\section{Discussion}\label{sec:discussion}

\subsection{Halving versus backtracking}

Both algorithms enforce the same majority invariant: discard only
bad-heavy sets. They organize the surviving information differently.
Table~\ref{tab:compare} summarizes the comparison.

\begin{table}[h]
\centering
\begin{tabular}{@{}lll@{}}
\toprule
 & Halving recursion~\cite{CLRS09} & \BCE{} (this paper) \\
\midrule
Structure              & rounds of disjoint pairs            & single scan with a stack \\
What a consistent test does & keep one chip, discard the other & keep both chips \\
Worst-case tests       & $\le n-1$                           & $\le n-1$ \\
Rounds / parallelism   & $\lceil\log_2 n\rceil$ parallel rounds & sequential \\
Input access           & all chips up front                  & online \\
Odd sizes              & parity rule needed                  & no special case \\
Proof                  & induction over rounds               & two invariants \\
Early termination      & not immediate                       & Proposition~\ref{prop:early} \\
\bottomrule
\end{tabular}
\caption{Comparison of the two linear-test algorithms for the chip problem.}
\label{tab:compare}
\end{table}

A conceptual difference is what happens after a consistent test. The halving
algorithm must \emph{throw away} one chip of each consistent pair to restore
the majority invariant at the next level. \BCE{} keeps both chips, and
homogeneity (Lemma~\ref{lem:homog}) accounts for them together. The
information from a consistent test is therefore never lost. It is stored in
the stack until a later inconsistent test uses it.

\subsection{Relation to Boyer--Moore majority voting}\label{sec:mjrty}

Readers familiar with the Boyer--Moore majority vote algorithm
\textsc{Mjrty}~\cite{BM91} (see also Misra and Gries~\cite{MG82}) will
recognize the pattern. \textsc{Mjrty} scans a sequence of votes, keeping a
candidate value and a counter. A vote equal to the candidate increments the
counter, and a different vote decrements it, implicitly cancelling a pair of
distinct votes. In \textsc{Mjrty} the ``retained set'' consists of $k$ copies
of the same value, so it can be stored as a (value, counter) pair.

\BCE{} is exactly this idea transplanted to chip testing, where
Lemma~\ref{lem:homog} plays the role of ``all retained votes are equal.'' The
difference is that chips are \emph{not values that can be copied}. A test
must involve two physical chips. Once the candidate is discarded after an
inconsistent test, the algorithm needs another \emph{actual chip} of the same
type to continue. A counter records how many such chips exist but not which
ones they are. \BCE{} therefore materializes the counter as the stack $R$ of
witnesses: the stack height is the \textsc{Mjrty} counter, and backtracking
is the decrement. This explains the $\Theta(n)$ worst-case memory in
Remark~\ref{rem:space}, compared with the $O(\log n)$ bits used by
\textsc{Mjrty}.

In the two-sided model of Section~\ref{sec:prelim}, the chips in $R$ are
interchangeable by Lemma~\ref{lem:homog}. Backtracking to the \emph{most
recent} retained chip is thus a convenient choice rather than a necessary
one, and any container would do. The next subsection shows a slightly weaker
model in which the last-in-first-out order is essential.

\subsection{One-directional tests}\label{sec:oneway}

In the classical diagnosis model of Preparata, Metze, and Chien~\cite{PMC67},
and in the knights-and-spies puzzle~\cite{Ble83,Wil10}, a test is
\emph{directed}. We ask chip $x$ about chip $y$ and receive a single verdict.
Such a test is weaker, because only one of the two verdicts is received. Nevertheless, a stack-based scan still finds a good chip with at most
$n-1$ directed tests.

Maintain the retained chips as a stack $c_1, c_2, \dots, c_h$, from bottom to
top. When $x_i$ arrives with a nonempty stack, ask $x_i$ about the top chip
$c_h$. If $x_i$ says $c_h$ is good, push $x_i$. Otherwise, pop $c_h$ and
discard both $c_h$ and $x_i$. If the stack is empty, push $x_i$ without a
question. Return the \emph{bottom} chip $c_1$.

The invariant replacing Lemma~\ref{lem:homog} is \emph{monotonicity}: no good
chip lies above a bad chip in the stack. If $x_i$ is good and vouches for
$c_h$, then $c_h$ is good, so pushing preserves monotonicity. If $x_i$ is
bad, it sits on top, which is also allowed. Popping preserves monotonicity
trivially. If $x_i$ accuses $c_h$, then $x_i$ and $c_h$ cannot both be good,
so discarded pairs are still bad-heavy. The calculation in
Theorem~\ref{thm:correct} shows that the final stack has a good majority. By
monotonicity, its bottom chip is good. Here the stack discipline matters:
the incoming chip must be compared against the \emph{top}, and the answer is
read from the \emph{bottom}. Chains of this kind, in which each person vouches
for the previous one, also underlie Wildon's analysis of the knights-and-spies
problem~\cite{Wil10}.

\subsection{Optimality}

\BCE{} is not designed to optimize lower-order terms. For the closely related
\emph{majority problem}, where one must find an element of the majority color
using equal/unequal comparisons, exactly $n - \nu(n)$ comparisons are
necessary and sufficient in the worst case. Here $\nu(n)$ denotes the number
of ones in the binary expansion of $n$~\cite{SW91,ARS93,Wie02}. Alonso,
Chassaing, Reingold, and Schott~\cite{ACRS04} showed that the chip problem is
closely related to a modified majority problem in the worst case and derived
upper and lower bounds from this connection. The optimal algorithms merge
homogeneous groups of equal power-of-two sizes, in the manner of binomial
heaps. Compared with such algorithms, \BCE{} gives up a lower-order saving of at
most logarithmically many tests in exchange for a single sequential pass and
a proof that fits in a paragraph.

\section{Related Work}\label{sec:related}

\paragraph{System-level fault diagnosis.}
Preparata, Metze, and Chien~\cite{PMC67} introduced the model in which units
test one another and faulty units give unreliable verdicts. They showed that
diagnosis requires fewer than half of the units to be faulty. Adaptive
diagnosis, where each test may depend on previous outcomes, was developed by
Hakimi and Nakajima~\cite{HN84} and Hakimi and Schmeichel~\cite{HS84}.
Parallel and probabilistic variants followed~\cite{SHOS90,PU98}. In adaptive
algorithms, a standard first phase finds one reliable unit, which is the
chip problem, and then uses it to diagnose the rest.

\paragraph{The chip problem.}
The two-sided pairwise formulation used here is the one of Cormen et
al.~\cite{CLRS09}, whose intended solution is the halving recursion of
Section~\ref{sec:dc}. Alonso et al.\ studied the worst case~\cite{ACRS04} and
the average case~\cite{ACRS06} of finding one good chip, building on
majority-problem techniques.

\paragraph{Knights, knaves, and spies.}
Puzzles about truth-tellers and liars go back at least to
Smullyan~\cite{Smu78}. Blecher~\cite{Ble83} studied a question-counting version of the
truth-tellers-and-liars puzzle using an adversary argument. Wildon~\cite{Wil10}
determined exactly the number of directed questions needed to identify all
knights and spies, via the Spider Interrogation Strategy. Its first phase is a
candidate-elimination procedure related to ours. In later work~\cite{Wil16},
he studied a majority/minority search game.

\paragraph{Majority voting and the majority problem.}
The Boyer--Moore algorithm~\cite{BM91} and its generalization by Misra and
Gries~\cite{MG82} are the canonical one-pass, pairwise-cancellation
algorithms. Fischer and Salzberg~\cite{FS82} gave tight bounds for determining
a majority with arbitrarily many colors. For two colors, the exact
worst-case bound $n-\nu(n)$ is due to Saks and Werman~\cite{SW91}, with
simpler proofs by Alonso, Reingold, and Schott~\cite{ARS93} and by
Wiener~\cite{Wie02}. The average case was settled in~\cite{ARS97}. Aigner's
survey~\cite{Aig04} covers many variants.

\section{Conclusion}\label{sec:conclusion}

We presented Backtracking Candidate Elimination, a one-pass algorithm that
finds a good chip among $n$ chips with a good majority using at most $n-1$
pairwise tests. The halving algorithm keeps the majority invariant by
compressing consistent pairs level by level. \BCE{} instead keeps a
homogeneous stack of retained chips and discards a bad-heavy pair whenever a
test is inconsistent. Chips retained under a bad candidate are never
explicitly identified as bad; the scan simply eliminates them later through
backtracking. The resulting proof reduces to a single counting identity. It
also exposes the problem as a physical version of Boyer--Moore majority
voting, and it extends to one-directional tests. We hope this viewpoint is
useful for teaching the chip problem alongside, or in place of, the
recursive solution.

\bibliographystyle{plainurl}
\bibliography{refs}

\end{document}